\date{\today}
\newtheorem*{theorem}{Theorem}
\newtheorem{corollary}{Corollary}
\newtheorem{lemma}{Lemma}
\newcommand{\R}{\mathbb{R}}
\newcommand{\N}{\mathbb{N}}
\newcommand{\Z}{\mathbb{Z}}
\newcommand{\St}{\mathbb{S}^2}
\newcommand{\dd}[2]{\frac{\partial #1}{\partial #2}}
\newcommand{\m}{\boldsymbol{m}}
\newcommand{\y}{\boldsymbol{y}}
\newcommand{\eps}{\varepsilon}
\newcommand{\del}{\partial}
\newcommand{\bs}[1]{ \boldsymbol{#1}}
\newcommand{\taf}{ \boldsymbol{\tau} }
\newcommand{\e}{\boldsymbol{\hat{e}}}
\newcommand{\n}{\boldsymbol{\nu}}
\newcommand{\intt}{\int_{\R^2} }
\newcommand{\weak}{\rightharpoonup}
\renewcommand{\div}{\mathop{\mathrm{div}}\nolimits}
\newcommand{\curl}{\mathop{\mathrm{curl}}\nolimits}
\def\XXint#1#2#3{{\setbox0=\hbox{$#1{#2#3}{\int}$}
\vcenter{\hbox{$#2#3$}}\kern-.5\wd0}}
\title{Curvature stabilized skyrmions with angular momentum}
\thanks{This work is supported by Deutsche Forschungsgemeinschaft (DFG grant no. ME 2273/3-1). }
\keywords{Magnetic skyrmions, Landau-Lifshitz equation, angular momentum}
\subjclass{49S05, 35Q60, 37K05, 82D40}
\author{Christof Melcher}
\address{RWTH Aachen University \\ Lehrstuhl I f\"ur Mathematik  \\ 52056 Aachen \\ Germany}
\address{JARA -- Fundamentals of Future Information Technology}
\email{melcher@rwth-aachen.de}
\author{Zisis N. Sakellaris}
\address{RWTH Aachen University \\ Lehrstuhl I f\"ur Mathematik  \\ 52056 Aachen \\ Germany}
\email{sakellaris@math1.rwth-aachen.de}
\begin{document}
\maketitle

\begin{abstract} We examine skyrmionic field configurations on a spherical ferromagnet with large normal anisotropy.
Exploiting variational concepts of angular momentum
we find a new family of localized solutions to the Landau-Lifshitz equation that are topologically distinct
from the ground state and not equivariant. Significantly, we observe an emergent spin-orbit coupling on the level of magnetization dynamics in a simple system without individual rotational invariance in spin and coordinate space. 
\end{abstract}

\section{Introduction}
We consider a spherical ferromagnet with perpendicular anisotropy described by magnetization fields $\m: \St \to \St$ governed by the energy 
\begin{equation} \label{eq:energy}
E(\m)= \frac{1}{2} \int_{\St} \left[ g^{\alpha \beta} \dd{\m}{x_\alpha}  \dd{\m}{x_\beta} + \kappa  \left(1-  (\m \cdot \n)^2 \right)\right] \, \mathrm{d} \sigma
\end{equation}
where $\n$ is the outer unit normal field to the $2$-sphere $\St$ and $\kappa>0$ is an anisotropy parameter. The energy is frame indifferent, i.e., we have invariance $E(\m_{R}) = E(\m)$ under joint rotations 
\begin{equation} \label{eq:joint_rotation}
\m_{R}(\bs y)=  R  \m(R^{-1} \bs y)
\end{equation}
for all $R \in \mathrm{SO}(3)$ and $\y \in \St$. Due to the link between anisotropy and curvature, however, there is no invariance with respect to individual rotations in spin and coordinate space, in contrast to conventional sigma or Skyrme models \cite{Belavin_Polyakov, Esteban, Lin_Yang, Manton_Sutcliffe}.
The reduced symmetry has a stabilizing effect on localized solitonic structures \cite{Krav} in analogy with chiral skyrmions in magnetic systems without inversion symmetry \cite{Bogdanov_Hubert_1994, Bogdanov_Yablonskii,  Mel1}. The topological classification of such field configurations $\m$ is based on the topological charge or skyrmion number $Q(\m) \in \Z$. For sufficiently regular $\m:\St \to \St$, it is given by
\begin{equation} \label{eq:charge}
Q(\m) = \frac{1}{4 \pi} \int_{\St} \m^\ast \omega_{\St} 
\end{equation}
where $\omega_{\St}$ is the standard volume form on $\St$ and $\m^\ast \omega_{\St}$ its pull back by $\m$.  For sufficiently large $\kappa$ the ground state is the hedgehog $\m=\pm \n$ as recently proven in \cite{Di_Fratta_2019}. While ground states belong to topological classes with $Q= \pm 1$, skyrmionic solutions with $Q=0$ emerge as excited states for large $\kappa$ even without the aid of chirality inducing spin-orbit terms \cite{Krav}. As $\kappa \to \infty$ such local minima subconverge in measure to a hedgehog while accumulating topological charge at a point on $\St$, the skyrmion center.  One intention of this letter is to give a rigorous variational footing to this observation based on the attainment of 
\[
\inf \{ E(\m):\m \in H^1(\St;\St) \text{  with  } Q(\m)=0\}<8 \pi
\]
for arbitrary large $\kappa$, where $H^1(\St;\St)$ is the energy space based on the usual Sobolev space $H^1(\St;\R^3)$.
The strict energy bound  (Lemma \ref{lemma:energy}) in the spirit of \cite{Brezis_Coron} expresses stability in the sense that collapsing part of the topological charge is energetically unfavorable. Such local energy minimizers are static solutions of the governing Landau-Lifshitz equation
\begin{equation} \label{eq:Landau_Lifshitz}
\del_t \m = \m \times \left[ \Delta_{\St}\m + \kappa  (\m \cdot \n) \n \right]
\end{equation}
where $\Delta_{\St}$ denotes the Laplace-Beltrami operator.
Aiming at a more general class of time periodic solutions of \eqref{eq:Landau_Lifshitz} it is customary to
take into account a further variational constraint or interaction term in form of a conserved functional that may be identified with a form of
angular momentum \cite{Goldstein_angular, Komineas_Papanicolaou, Kosevich, Papanicolaou_Tomaras, Papanicolaou_Zakrezewski, Yan_2013}. 
The angular momentum decomposes into a spin and orbital part accounting for rotation in spin and coordinate space, respectively. 
Their possible interplay is traditionally discussed in the presence of magnetostatic interactions and more recently in the presence of chiral
interactions \cite{Schuette_2014}. On the level of reduced rotational invariance, both situations are comparable with the present system.
Rigorous existence results to date address the spatially co-rotational case of dynamically stabilized magnetic bubbles with coherently precessing spins \cite{Gustafson_Shatah}. In this letter we focus on joint rotations and configurations of reduced rotational symmetry.  We introduce the following notion of angular momentum on $\St$
\begin{equation} \label{eq:angular}
\bs J(\m) = \bs S(\m) + \bs L(\m)\in \R^3
\end{equation}
decomposing into spin angular momentum
\begin{equation} \label{eq:spin}
{\bs S}(\m) =  \int_{\St} \m  \, d \sigma
\end{equation}
averaging $\m$ with respect to the surface measure,
and orbital angular momentum 
\begin{equation} \label{eq:orbital}
{\bs L}(\m)= \int_{\St} \n \, (\m^\ast \omega_{\St}) 
\end{equation}
averaging the outer unit normal $\n$ with respect to the topological charge distribution of $\m$. 
With an appropriate form of Poisson bracket \eqref{eq:Poisson}, the vectorial components of the individual momenta feature the usual commutation relations 
\begin{equation} \label{eq:commutation}
\{S_i, S_j\} = \epsilon_{ijk} S_k \quad \text{and} \quad \{ L_i, L_j\} =  \epsilon_{ijk} L_k. 
\end{equation}
While $\bs J$ is a conserved quantity of \eqref{eq:Landau_Lifshitz}, $\bs S$ and $\bs L$ are not individually conserved as a consequence
of the lack of individual rotational invariance of $E$. In fact, spin and orbital angular momenta are generators of spin and coordinate rotations, respectively, 
\begin{equation} \label{eq:generators}
\{\m, S_3\} =  \e_3 \times \m  \quad \text{and} \quad  \{\m, L_3\} = -\dd{\m}{\chi}
\end{equation}
where $\dd{}{\chi} = x \times \nabla_x$ is the angular derivative around the $\e_3$ axis and $x \in \R^2$ a stereographic coordinate centered at the poles. This motivates a variational approach to skyrmionic solutions 
$\m(\cdot, t)=  \m_{R(\nu t)}$ performing a joint rotation in spin and coordinate space for some angle velocity $\nu$. Profiles $\m=\m(x)$ with $\bs J(\m)=J_0 \e_3$ of such spinning solutions may be obtained via a constrained variational principle 
\begin{equation} \label{eq:J_inf}
\inf \{ E(\m):\m \in H^1(\St;\St) \text{  with  } Q(\m)=0 \text{  and  } \bs J(\m)= \bs J_0 \}
\end{equation}
with $\nu$ appearing as Langrange multiplier. Our chief result is a proof of attainment.
 \begin{theorem}
For every $\kappa>0$ there exists $\eps>0$ such that if $\bs J_0 \in \R^3$ satisfies the bounds 
$4\pi <|\bs J_0| < 4\pi + \eps$,  then \eqref{eq:J_inf} is attained by a smooth 
field $\m:\St \to \St$ with $E(\m)< 8 \pi$ which is not equivariant.  Moreover, $\m$ is the profile of a jointly rotating solution
of the Landau-Lifshitz equation \eqref{eq:Landau_Lifshitz}. 
\end{theorem}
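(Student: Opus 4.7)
The plan is to apply the direct method of the calculus of variations, combined with Brezis--Coron type concentration compactness. The central obstacle is that under weak $H^1$ convergence neither the topological constraint $Q(\m)=0$ nor the angular momentum constraint $\bs J(\m)=\bs J_0$ is automatically preserved, since the nonlinear functional $\bs L$ is quadratic in the gradient and fails to be weakly continuous under bubbling. The entire argument rests on the strict energy bound $I<8\pi$ for the infimum $I$ of \eqref{eq:J_inf}, which forbids bubbling.

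First I would establish this bound. Lemma~\ref{lemma:energy} supplies a base competitor $\mt$ with $Q(\mt)=0$ and $E(\mt)<8\pi$, modelled on a near-hedgehog decorated with a small Belavin--Polyakov anti-bubble or equivalently on a Kravchuk-type equivariant profile; a short integration by parts (cf.\ the final paragraph below) shows that every such equivariant competitor lies precisely on the sphere $|\bs J|=4\pi$. Opening up finite-dimensional parameters (bubble location, scale, internal phase, and a small equivariance-breaking deformation) produces a continuous family whose angular momenta sweep an open neighbourhood of that sphere, extending outside the ball $\{|\bs J|\le 4\pi\}$, while keeping $E<8\pi$. This realises every $|\bs J_0|\in(4\pi,4\pi+\eps)$ as admissible and yields $I<8\pi$.

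Next, for a minimising sequence $\m_n\weak\m$ in $H^1$, I would invoke Struwe's bubble decomposition: any defect of $|\nabla(\m_n-\m)|^2$ concentrates into bubbles, each carrying at least $4\pi$ of energy and an integer topological charge. Since $I<8\pi$, at most one bubble of charge $k=\pm 1$ could separate; but then the limit would satisfy $E(\m)\le I-4\pi<4\pi$ and $Q(\m)=-k$, contradicting the sigma-model bound $E(\m)\ge 4\pi|Q(\m)|$. Hence $\m_n\to\m$ strongly in $H^1$, so $Q$, $\bs S$ and $\bs L$ all pass to the limit and $\m$ is admissible with $E(\m)=I$. The constraint $Q=0$ is a null Lagrangian and carries no multiplier; the multiplier $\bs\omega\in\R^3$ conjugate to $\bs J=\bs J_0$ enters the Euler--Lagrange equation via the generator identity \eqref{eq:generators}, so that $\m$ is precisely the profile of the jointly rotating solution $\m(\cdot,t)=\m_{R(\bs\omega t)}$ of \eqref{eq:Landau_Lifshitz}. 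Standard elliptic regularity for the static Landau--Lifshitz equation upgrades $\m$ to smooth.

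Finally, for non-equivariance, suppose $\m$ were invariant under joint rotation about some axis, which after a frame change we take to be $\e_3$. Smoothness at the poles forces $\Theta(0),\Theta(\pi)\in\pi\Z$ in the standard polar ansatz $\m(\theta,\phi)=(\sin\Theta\cos\phi,\sin\Theta\sin\phi,\cos\Theta)$, and the meridian winding number $n:=[\Theta(\pi)-\Theta(0)]/(2\pi)\in\Z$ is well-defined. If $n\neq 0$, the Bogomolny splitting $E\ge 4\pi(|Q_+|+|Q_-|)\ge 8\pi|n|$ contradicts $E(\m)<8\pi$. If $n=0$, so $\cos\Theta(0)=\cos\Theta(\pi)\in\{\pm 1\}$, integration by parts in $L_3=2\pi\int_0^\pi\cos\theta\sin\Theta\,\Theta'\,d\theta$ combined with $S_3=2\pi\int_0^\pi\cos\Theta\sin\theta\,d\theta$ yields the exact identity $J_3=\pm 4\pi$, incompatible with $|\bs J_0|>4\pi$. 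Either way equivariance is precluded. The hardest step in this programme remains the bubble exclusion, since the quadratic functional $\bs L$ would otherwise fail to be weakly continuous; the strict bound $I<8\pi$ is the pivot that closes the argument.
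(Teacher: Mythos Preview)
Your overall architecture matches the paper's: energy bound below $8\pi$, concentration--compactness to rule out bubbling, Lagrange multiplier to produce the jointly rotating solution, and the identity $|\bs J|=4\pi$ for equivariant degree-zero fields to preclude equivariance. The bubble exclusion, the multiplier step, and the regularity are handled as in the paper.

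There is, however, a genuine gap in your competitor construction. You assert that varying ``bubble location, scale, internal phase, and a small equivariance-breaking deformation'' sweeps an open neighbourhood of the sphere $\{|\bs J|=4\pi\}$ and in particular reaches $|\bs J|>4\pi$. But you yourself use (and the paper proves in Lemma~\ref{lemma:equivariant}) that the $1$-equivariant fields are precisely the \emph{critical points} of $J_3$. Hence every first-order perturbation leaves $|\bs J|$ unchanged to leading order, and bubble location, scale, and phase keep you on the critical sphere. A transversality or ``open image'' heuristic therefore fails: whether a small equivariance-breaking deformation sends $|\bs J|$ above or below $4\pi$ is a second-order question, and the sign must be computed. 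The paper supplies exactly this missing ingredient via Lemma~\ref{lemma:elliptical} and Corollary~\ref{cor:maximum}: for the elliptical distortion $\m_s(x)=\m(sx_1,x_2)$ of an almost-minimizing co-rotational skyrmion one has
\[
\frac{\mathrm{d}^2}{\mathrm{d}s^2}\Big|_{s=1} J_3(\m_s)=-\frac{1}{2}\int_{\R^2}\omega(\m)\,|x|^2\lambda^2(x)\,\mathrm{d}x<0,
\]
so $J_3(\m)=-4\pi$ is a strict local maximum and $|\bs J(\m_s)|$ strictly increases for $s\neq 1$ near $1$. Only with this sign in hand does Corollary~\ref{cor:energy} (existence of an admissible competitor with $|\bs J_0|\in(4\pi,4\pi+\eps)$ and $E<8\pi$) follow. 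Your sketch needs this second-variation calculation; without it the inequality $I<8\pi$, which you rightly identify as the pivot of the whole argument, is not established.

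A minor remark on your non-equivariance paragraph: the ansatz $\m(\theta,\phi)=(\sin\Theta\cos\phi,\sin\Theta\sin\phi,\cos\Theta)$ is only co-rotational; general $1$-equivariant fields carry a radial phase $\alpha(\theta)$. Fortunately $L_3$ and $S_3$ are independent of $\alpha$ (cf.\ \eqref{eq:omega_polar}), so your integration by parts still yields $J_3=\pm4\pi$, but you should state the ansatz in full generality.
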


Equivariant fields $\m$ possess, by definition, a rotation axis such that $\m_R=\m$ for all $R \in \mathrm{SO}(3)$ in the stabilizer of this axis. The value $|\bs J| = 4\pi$ is critical in the sense that it is assumed by all equivariant fields of degree $Q=0$ (Lemma \ref{lemma:equivariant}), which are precisely the critical fields of $\bs J$ in this topological class. The angular momentum therefore serves as
a measure of rotational symmetry, and fields of interest necessarily exhibit reduced rotational symmetry. Examining second variations of $\bs J$ (Lemma \ref{lemma:elliptical}), we shall show that elliptical distortions of equivariant fields near the energy minimum strictly increase the size of angular momentum. As usual \cite{Brezis_Coron, Esteban, Lin_Yang, Mel1} the key to existence is an energy estimate (Lemma \ref{lemma:energy}) that yields compactness of suitable minimizing sequences.\\

An important open problem is to understand how minimal energies depend on the size of the angular momentum in order to 
estimate the rotation frequency $\nu$ and to ascertain the obtained solution to the Landau-Lifshitz equation is non-static. 
This is of course closely related to the notoriously difficult problem of proving symmetry of minimizing skyrmions. 
Carrying out a similar analysis for chiral skyrmions in the non-compact space $\R^2$ as in \cite{Mel1} is substantially more challenging since
the orbital angular momentum functional based on the second moment of the topological charge density \cite{Papanicolaou_Tomaras}
may become unbounded for finite energy configurations. Another interesting perspective is to examine different forms of dynamic excitation or stabilization of skyrmions combining precession in spin and breathing in coordinate space as suggested in \cite{Schuette_2014, Zhou_2015}.

\section{Notation and preliminaries} 
\subsection*{Stereographic representation} We equip $\St \subset \R^3$ with the orientation given by the outer unit normal $\n$.
Most arguments are carried out in orientation preserving stereographic coordinates $x \in \R^2$ centered at the north pole. 
Points $\y \in \St \setminus \{ -\e_3\}$ are parametrized by
\begin{equation} \label{eq:stereographic}
\bs \Phi(x) = \lambda(x) \left(x, \frac{1-|x|^2}{2}  \right) 
\quad \text{with conformal factor} \quad \lambda(x)=\frac{2}{1+|x|^2}.
\end{equation} 
The metric tensor and the surface element are given by 
$g(x)= \lambda^2(x) \bs 1$
and $\mathrm{d} \sigma= \lambda^2(x) \mathrm{d} x$, respectively, so that the energy \eqref{eq:energy} may be written
\[
E(\m)= \frac 1 2  \int_{\R^2}\left[  |\nabla \m|^2 + \kappa \left( 1-(\m \cdot \n)^2 \right)  \lambda^2(x) \right] \, \mathrm{d} x.
\]
Here and in the following we use the same notation for $\m:\St \to \St$ and its pull back $\m:\R^2 \to \St$ by $\bs \Phi$.
By conformal invariance, the exchange or Dirichlet energy is not distinguishable from the flat case. 
The skyrmion number \eqref{eq:charge} may be expressed in terms of the topological vorticity
\begin{equation} \label{eq:omega}
 \omega(\m) = \m \cdot \left( \frac{\del \m}{\del x_1} \times  \frac{\del \m}{\del x_2} \right) 
\end{equation}
through the identity $\m^\ast \omega_{\St} = \omega(\m) \mathrm{d} x_1 \wedge \mathrm{d} x_2$ as
\[
Q(\m) = \frac{1}{4 \pi} \int_{\R^2} \omega(\m) \, \mathrm{d} x.
\]
The differential form definition of the orbital angular momentum \eqref{eq:orbital} can be turned into a surface 
integral in terms of the surface Jacobian $\mathcal{J}_{\m}$ of $\m$, i.e., the Hodge dual of 
$\m^\ast \omega_{\St} =  \mathcal{J}_{\m} d \sigma$ leading to
${\bs L}(\m)= \int_{\St} \n \, \mathcal{J}_{\m}\, \mathrm{d} \sigma$.
In stereographic coordinates \eqref{eq:stereographic} the total angular momentum \eqref{eq:angular} reads
\begin{equation}
{\bs J}(\m)= \int_{\R^2}  \lambda^2(x) \, \m + \bs \Phi(x) \omega(\m) \, \mathrm{d} x.
\end{equation}
In the last component we obtain in particular
\[
S_3(\m)= \int_{\R^2} \lambda^2(x) \, m_3 \mathrm{d} x 
\]
and 
\begin{equation} \label{eq:stereo_L}
L_3(\m) =  4 \pi Q(\m) - \int_{\R^2} \lambda(x)|x|^2   \omega(\m) \, \mathrm{d} x.
\end{equation}
For $|x| \ll 1$ the orbital part reduces to the second moment definition from \cite{Komineas_Papanicolaou, Papanicolaou_Tomaras}.

\subsection*{Landau-Lifshitz equation}
 Returning to the energy, the variational $L^2$-gradient with respect to the surface measure on $\St$ will be denoted by $\bs \nabla E(\m)$ and is defined by the relation
\[
\int_{\St} \bs \nabla E(\m) \cdot \bs \phi \, \mathrm{d} \sigma = \delta E(\m) \langle \bs \phi \rangle = \frac{\mathrm{d}}{\mathrm{d}s}\Big|_{s=0} E(\m_s) 
\]
for variations $\m_s$ with $\m_s = \m + s  \bs \phi + o(s)$ with $\bs \phi \in C^\infty(\St; \R^3)$. Explicitly  
\[
\bs \nabla E(\m)= - \left[ \Delta_{\St} \m + \kappa  (\m \cdot \n) \n \right],
\]
where $\Delta_{\St} = \lambda^{-2} \Delta_{x}$ is the Laplace-Beltrami operator.
We are interested in static and dynamic solutions of the Landau-Lifshitz equation 
\[
\del_t \m + \m \times \bs \nabla E(\m) =0
\]
which is the abstract form of \eqref{eq:Landau_Lifshitz}.
In the sequel we shall use the Poisson bracket for functionals $F$ and $G$ given by 
\begin{equation} \label{eq:Poisson}
\{F,G\} =  \int_{\St} \m \cdot \left[ \bs \nabla F(\m) \times \bs \nabla G(\m) \right] \mathrm{d} \sigma.
\end{equation}
In the weak formulation $\{F,G\} = - \delta F(\m) \left\langle  \m \times  \bs \nabla G(\m) \right \rangle$, 
the evaluation functional $F(\m) = \m$ may be included,  i.e. $\{\m, E\}= - \m \times \bs \nabla E(\m)$, giving rise to the
Hamiltonian formulation of the Landau-Lifshitz equation
\[
\del_t \m = \{\m, E\}.
\]

\section{Symmetry and conservation of angular momentum}

\begin{lemma} \label{lemma:frame_indifference}
The angular momentum is frame indifferent in the sense that 
\[
\bs J(\m_R) = R \bs J(\m) \quad \text{for all} \quad R \in \mathrm{SO}(3).
\]
\end{lemma}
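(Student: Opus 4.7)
The plan is to split $\bs J = \bs S + \bs L$ and verify frame indifference summand by summand, using in each case the change of variables $\y = R\z$ combined with two basic invariances: the rotation invariance of the surface measure $\mathrm{d}\sigma$ and of the volume form $\omega_{\St}$ on $\St$, together with the elementary identity $\n(R\z) = R\,\n(\z)$ for the outer unit normal of the unit sphere.

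For the spin part $\bs S(\m) = \int_{\St} \m \, \mathrm{d}\sigma$ this is immediate: substituting $\y = R\z$ and pulling $R$ out of the integral gives
\[
\bs S(\m_R) = \int_{\St} R\,\m(R^{-1}\y)\,\mathrm{d}\sigma(\y) = R \int_{\St} \m(\z)\,\mathrm{d}\sigma(\z) = R\,\bs S(\m).
\]

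For the orbital part $\bs L(\m) = \int_{\St} \n\,\m^\ast \omega_{\St}$ the main step is to transform the pullback $(\m_R)^\ast \omega_{\St}$ under the joint rotation $\m_R = R \circ \m \circ R^{-1}$. Using the contravariance of pullback and $R^\ast \omega_{\St} = \omega_{\St}$, I would write
\[
(\m_R)^\ast \omega_{\St} = (R^{-1})^\ast\,\m^\ast\,R^\ast \omega_{\St} = (R^{-1})^\ast\,\m^\ast \omega_{\St}.
\]
Equivalently, via the Hodge-dual representation $\m^\ast \omega_{\St} = \mathcal{J}_{\m}\,\mathrm{d}\sigma$, one has $\mathcal{J}_{\m_R}(\y) = \mathcal{J}_{\m}(R^{-1}\y)$ because both $\mathrm{d}\sigma$ and $\omega_{\St}$ are $R$-invariant. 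Then changing variables $\y = R\z$ and using $\n(R\z) = R\,\n(\z)$ yields
\[
\bs L(\m_R) = \int_{\St} \n(\y)\,\mathcal{J}_{\m}(R^{-1}\y)\,\mathrm{d}\sigma(\y) = R \int_{\St} \n(\z)\,\mathcal{J}_{\m}(\z)\,\mathrm{d}\sigma(\z) = R\,\bs L(\m).
\]
Adding the two identities gives the claim.

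There is no real obstacle here; the only point requiring a little care is the order of pullbacks for the composition $R \circ \m \circ R^{-1}$ and the fact that $R\in\mathrm{SO}(3)$ acts as an orientation-preserving isometry of $\St$, so the change of variables for the top form $\m^\ast \omega_{\St}$ introduces no sign. Alternatively, one can avoid differential forms altogether and verify the identity directly on the stereographic expression
$\bs J(\m) = \int_{\R^2} \lambda^2 \m + \bs \Phi\,\omega(\m)\,\mathrm{d}x$
by observing that $\omega(\m)\,\mathrm{d}x$ transforms as a rotation-invariant density on $\St$ and that $\bs\Phi$ plays the role of $\n$ up to pulling back by stereographic coordinates; this is slightly more cumbersome but entirely parallel.
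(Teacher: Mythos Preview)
Your proposal is correct and follows essentially the same route as the paper: both split $\bs J=\bs S+\bs L$, dismiss $\bs S$ by a direct change of variables, and for $\bs L$ use the pullback identity $(\m_R)^\ast\omega_{\St}=(R^{-1})^\ast(\m^\ast\omega_{\St})$ obtained from $R^\ast\omega_{\St}=\omega_{\St}$, then transform $\n$ under $R$ and invoke invariance of the form integral. Your extra remark on orientation preservation and the optional Jacobian/stereographic reformulation is fine but not needed.
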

\begin{proof} In fact, the individual momenta $\bs S$ and $\bs L$ are frame indifferent, which is obvious for $\bs S$. 
Concerning $\bs L$ we may use 
differential form calculus. In terms of the orientation preserving transformation $T(\y)=R \y$ we have $\m_R= T \circ \m \circ T^{-1}$ and
$\m_R^\ast\omega_{\St} = (T^{-1})^\ast(\m^\ast\omega_{\St})$ using rotational invariance of $\omega_{\St}$. 
With $(T^\ast \n)=R \n$, it
 follows that
\[
\bs L(\m_R) = \int_{\St}  \n \, (T^{-1})^\ast(\m^\ast\omega_{\St})
= \int_{\St}   (T^{-1})^\ast \left( R \n \, (\m^\ast\omega_{\St}) \right) = R \bs L(\m)\]
by the invariance property of the form integral.
\end{proof}

\begin{lemma} \label{lemma:1st_var}
For smooth tangent vector fields $\bs \phi \perp \m$ 
\[
\delta S_3(\m)\langle \bs \phi \rangle = \int_{\St}  \e_3  \cdot \bs \phi \, d \sigma
\quad
\text{and} 
\quad
\delta L_3(\m)\langle \bs \phi \rangle = - \int_{\St} \left( \m \times\dd{\m}{\chi}  \right) \cdot \bs \phi \, d \sigma 
\]
where $\dd{}{\chi}= x \times \nabla_x $ denotes the angular derivative around $\e_3$.
\end{lemma}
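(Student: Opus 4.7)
The first identity is essentially immediate. I would use the admissible family $\m_s = (\m+s\bs\phi)/|\m+s\bs\phi|$ and note that $\bs\phi \perp \m$ yields $|\m+s\bs\phi|^2 = 1 + s^2|\bs\phi|^2$, so $\m_s = \m + s\bs\phi + O(s^2)$. Differentiating $S_3(\m_s) = \int_{\St}\e_3 \cdot \m_s\,d\sigma$ at $s=0$ under the integral sign gives the claim.

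For $L_3$ the plan is to work in the stereographic chart and exploit \eqref{eq:stereo_L}. Since $Q(\m)$ is a homotopy invariant and therefore constant along smooth variations, $\delta L_3 \langle \bs\phi\rangle = -\int_{\R^2} \lambda(x)|x|^2\, \delta\omega(\m)\langle\bs\phi\rangle\,dx$. Differentiating $\omega(\m) = \m \cdot (\del_1\m \times \del_2\m)$ and invoking the two orthogonalities $\m \perp \del_i\m$ and $\bs\phi \perp \m$ (so that the direct term $\bs\phi \cdot (\del_1\m\times\del_2\m)$ vanishes pointwise) together with cyclic permutation of the triple product gives
\[
\delta\omega(\m)\langle\bs\phi\rangle \;=\; -\del_1\bigl(\bs\phi\cdot(\m\times\del_2\m)\bigr) \;+\; \del_2\bigl(\bs\phi\cdot(\m\times\del_1\m)\bigr)
\]
modulo the same parallelism term, which drops out.

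I would then integrate by parts against the weight $h(x) := \lambda(x)|x|^2$. The boundary flux through large circles vanishes because smoothness of $\m$ and $\bs\phi$ at the south pole forces $\del_i\m = O(|x|^{-2})$ in the stereographic chart, while $h$ and $\bs\phi$ remain bounded, so the flux decays like $|x|^{-1}$. The only interior contribution not multiplying $\nabla h$ is $\propto h\,\bs\phi \cdot [\del_1(\m\times\del_2\m) - \del_2(\m\times\del_1\m)] = 2h\,\bs\phi\cdot(\del_1\m\times\del_2\m)$, which vanishes by the same parallelism. What remains is
\[
\delta L_3(\m)\langle\bs\phi\rangle \;=\; \int_{\R^2} \bs\phi \cdot \bigl[ -\del_1 h\,(\m\times\del_2\m) + \del_2 h\,(\m\times\del_1\m) \bigr]\, dx.
\]

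The final ingredient is the algebraic identity $\nabla h(x) = \lambda^2(x)\,x$, a one-line differentiation. Combined with $\dd{\m}{\chi} = x_1\del_2\m - x_2\del_1\m$ this collapses the bracket to $-\lambda^2\,\m\times \dd{\m}{\chi}$, and reabsorbing $\lambda^2\,dx$ into $d\sigma$ delivers the claimed formula. The main obstacle is really only bookkeeping: making sure the two would-be curvature terms proportional to $\del_1\m\times\del_2\m$ both vanish via the parallelism, and that the integration-by-parts boundary flux at infinity is genuinely zero; once these are in hand, the rest is a clean manipulation of the triple product.
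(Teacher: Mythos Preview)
Your argument is correct and follows essentially the same route as the paper: work in stereographic coordinates, vary $\omega(\m)$, integrate by parts using the key identity $\nabla(\lambda|x|^2)=\lambda^2 x$, and discard the $(\del_1\m\times\del_2\m)\cdot\bs\phi$ contributions via $\bs\phi\perp\m$. The only notable difference is that the paper carries out the computation for \emph{general} (not necessarily tangent) $\bs\phi$, recording the residual term $-3\int_{\R^2}\lambda|x|^2(\del_1\m\times\del_2\m)\cdot\bs\phi\,\mathrm{d}x$ in \eqref{eq:delta_L}; this version is reused in the proof of Lemma~\ref{lemma:elliptical} when evaluating $\delta J(\m)\langle|\bs\phi|^2\m\rangle$, so you may want to keep that term around rather than discarding it at the outset.
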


\begin{proof} We only compute the variation of $L_3$. 
Taking into account 
\begin{equation} \label{eq:density}
\nabla_x (\lambda(x) |x|^2) =\lambda^2(x) x
\end{equation}
we obtain for smooth $\bs \phi$ not necessarily tangent 
\begin{eqnarray}\label{eq:delta_L}
\delta L_3(\m)\langle \bs \phi \rangle = &-& \int_{\R^2} \left(\m \times\dd{\m}{\chi}\right) \cdot \bs \phi   \lambda^2(x) \mathrm{d} x  \nonumber \\
 &-& 3 \int_{\R^2}   \lambda(x) |x|^2 \left( \frac{\del \m}{\del x_1} \times \frac{\del \m}{\del x_2} \right) \cdot \bs \phi  \, \mathrm{d} x,
\end{eqnarray}
where the second integral vanishes if $\bs \phi$ is tangent to $\m$. 
\end{proof}

Expressed in terms of the Poisson bracket \eqref{eq:Poisson}, spin and orbital angular momenta are now seen to emerge as generators of spin and space rotations around the $\e_3$ axis, respectively, as pointed out in \eqref{eq:generators}. Hence
\begin{equation} \label{eq:Poisson_J}
\{ \m, J_3 \} =  \e_3 \times  \m-\dd{\m}{\chi}.
\end{equation}
With the notation \eqref{eq:joint_rotation} for joint rotations we are looking for time periodic solutions of the Landau-Lifshitz equation \eqref{eq:Landau_Lifshitz} of the form
\begin{equation} \label{eq:rotating}
\m(\cdot, t)=  \m_{R(\nu t)} \quad \text{with rotations} \quad R(\alpha) = \exp(\alpha \, \mathbb{J}) 
\end{equation}
where $\mathbb{J}\y= \e_3 \times\y$.
A straight forward calculation taking into account \eqref{eq:Poisson_J} yields:
\begin{lemma}  \label{lemma:rotation}
For all $\m:\St \to \St$ the following identity holds true
\[
\frac{\del}{\del t} \m_{R(\nu t)}  = \nu \{\m, J_3\}_{R(\nu t)}.
\]
In particular, if $\{\m, E\} = \nu \{\m, J_3\}$ then $\m_{R(\nu t)}$ is a solution of \eqref{eq:Landau_Lifshitz}.
\end{lemma}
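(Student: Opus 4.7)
The claim reduces to a chain-rule computation coupled with the already-established bracket formula \eqref{eq:Poisson_J} for $J_3$. Writing $R(\alpha) = \exp(\alpha \mathbb{J})$ I record the elementary facts $R'(\alpha) = \mathbb{J}\,R(\alpha) = R(\alpha)\,\mathbb{J}$ and $R(\alpha)\e_3 = \e_3$, expressing that $\mathbb{J}$ generates rotations fixing the $\e_3$ axis.

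First I would differentiate $\m_{R(\nu t)}(\y) = R(\nu t)\,\m(R(-\nu t)\y)$ in $t$. The chain rule produces two terms. Differentiating the outer $R(\nu t)$ yields the spin piece $\nu\,\mathbb{J}\,R(\nu t)\,\m(R(-\nu t)\y) = \nu\,\e_3 \times \m_{R(\nu t)}(\y)$. Differentiating the argument produces $\nu\,R(\nu t)\,D\m(R(-\nu t)\y)\bigl[-\e_3\times R(-\nu t)\y\bigr]$, since the $t$-velocity of $R(-\nu t)\y$ is $-\mathbb{J}\,R(-\nu t)\y = -\e_3 \times R(-\nu t)\y$. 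On $\St$ the vector field $\bs y \mapsto \e_3 \times \bs y$ is precisely the infinitesimal generator of the rotation in $\chi$; in the stereographic chart \eqref{eq:stereographic} this matches $\dd{\m}{\chi}(\bs y) = D\m(\bs y)[\e_3 \times \bs y]$ (with the sign convention of $\dd{}{\chi} = x \times \nabla_x$), so the second contribution evaluates to $-\nu\,(\dd{\m}{\chi})_{R(\nu t)}(\y)$.

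Next I would compare with $\nu\{\m, J_3\}_{R(\nu t)}$ using \eqref{eq:Poisson_J}, namely $\{\m, J_3\} = \e_3 \times \m - \dd{\m}{\chi}$. Because $R(\alpha)\e_3 = \e_3$, the intertwining $R(\alpha)(\bs a \times \bs b) = (R(\alpha)\bs a)\times(R(\alpha)\bs b)$ shows that cross product with $\e_3$ commutes with joint rotation; together with the trivial $R(\alpha)(\dd{\m}{\chi})(R(-\alpha)\y) = (\dd{\m}{\chi})_{R(\alpha)}(\y)$ this gives $\{\m, J_3\}_{R(\nu t)} = \e_3 \times \m_{R(\nu t)} - (\dd{\m}{\chi})_{R(\nu t)}$, which matches the derivative computed above. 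For the supplementary assertion I would invoke frame indifference of $E$: differentiating $E(\m_{R,s}) = E(\m_s)$ in $s$ yields the equivariance $\bs\nabla E(\m_R)(\y) = R\,\bs\nabla E(\m)(R^{-1}\y)$, and the same cross-product intertwining then implies $\{\m_R, E\} = \{\m, E\}_R$ for every $R \in \SO(3)$. Under the hypothesis $\{\m, E\} = \nu\{\m, J_3\}$, applying $(\cdot)_{R(\nu t)}$ to both sides and using the main identity gives $\del_t \m_{R(\nu t)} = \{\m_{R(\nu t)}, E\}$, i.e.\ $\m_{R(\nu t)}$ solves \eqref{eq:Landau_Lifshitz} in its Hamiltonian form $\del_t \m = \{\m, E\}$.

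The only delicate point is bookkeeping---matching the sign and orientation conventions between the generator $\bs y \mapsto -\e_3 \times \bs y$ of coordinate rotation and the angular derivative $\dd{}{\chi} = x \times \nabla_x$ in the stereographic chart, and then chasing the various cross products through $R(\alpha)$. No analytic subtleties arise, since everything is a pointwise algebraic identity for smooth $\m$.
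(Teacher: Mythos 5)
Your computation is correct and is exactly the ``straightforward calculation taking into account \eqref{eq:Poisson_J}'' that the paper invokes without writing it out: the chain rule splits $\del_t \m_{R(\nu t)}$ into the spin piece $\nu\,\e_3\times\m_{R(\nu t)}$ and the orbital piece $-\nu\,(\del\m/\del\chi)_{R(\nu t)}$, which matches $\nu\{\m,J_3\}_{R(\nu t)}$ since $R(\alpha)$ fixes $\e_3$, and the second assertion follows from frame indifference of $E$ exactly as you argue. Your sign bookkeeping for the generator $\y\mapsto\e_3\times\y$ versus $\del/\del\chi=x\times\nabla_x$ in the chart \eqref{eq:stereographic} checks out.
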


\begin{lemma} \label{lemma:conservation}
$\bs J(\m)$ is conserved for smooth solutions $\m=\m(t)$ of \eqref{eq:Landau_Lifshitz}.
\end{lemma}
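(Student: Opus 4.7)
The plan is a Noether-type argument built on the Hamiltonian formulation $\del_t \m = \{\m, E\}$ of the Landau-Lifshitz equation together with the frame indifference of the energy under joint rotations. Because \eqref{eq:Poisson_J} identifies the components of $\bs J$ as the infinitesimal generators of joint rotations through the Poisson bracket, invariance of $E$ under joint rotations should translate into the vanishing of the Poisson brackets $\{E, J_i\}$, from which conservation of each $J_i$ along a Landau-Lifshitz trajectory follows at once.

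Concretely, along a smooth solution $\m = \m(t)$ of \eqref{eq:Landau_Lifshitz}, I would apply the chain rule and the abstract equation $\del_t \m = -\m \times \bs \nabla E(\m)$ to obtain
\[
\frac{d}{dt} J_i(\m(t)) = \delta J_i(\m)\langle \del_t \m \rangle = -\delta J_i(\m)\langle \m \times \bs \nabla E(\m)\rangle.
\]
The cyclicity of the scalar triple product rewrites the right-hand side as the Poisson bracket $\{J_i, E\}$, which by antisymmetry equals $-\{E, J_i\}$. Hence the task reduces to proving $\{E, J_i\} = 0$ for $i = 1, 2, 3$.

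Using the weak formulation of the bracket noted just after \eqref{eq:Poisson}, $\{E, J_i\} = \delta E(\m)\langle \{\m, J_i\}\rangle$. For $i = 3$, formula \eqref{eq:Poisson_J} recognizes $\{\m, J_3\}$ as $\frac{d}{d\alpha}\big|_{\alpha=0}\m_{R(\alpha)}$ for joint rotation around $\e_3$; differentiating the frame indifference identity $E(\m_{R(\alpha)}) = E(\m)$ at $\alpha = 0$ therefore kills this variation. For $i = 1, 2$ the same reasoning applies to rotations about the other coordinate axes; alternatively one conjugates by a rotation $R \in \SO(3)$ carrying $\e_3$ to $\e_i$ and invokes the frame indifference of both $E$ and $\bs J$ (Lemma \ref{lemma:frame_indifference}) to transfer the $\e_3$-case to the other components. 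Combining with the previous step gives $\frac{d}{dt} J_i(\m(t)) = 0$ for each $i$, proving the claim.

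I do not anticipate a serious obstacle: this is a clean Noether-style calculation, and all manipulations are legitimate because $\bs S$ is linear while $\bs L$ is quadratic in $\m$, so the $L^2$-gradients $\bs \nabla S_i$ and $\bs \nabla L_i$ are well defined and smooth for smooth $\m$, and $\St$ is compact without boundary so any integrations by parts inherited from \eqref{eq:delta_L} produce no boundary contributions. The only conceptual point to articulate carefully is the transfer from $J_3$ to the other components, which is why the frame indifference of $\bs J$ proved in Lemma \ref{lemma:frame_indifference} plays a role.
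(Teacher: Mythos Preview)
Your argument is correct and is a genuine Noether-type proof: you identify $J_3$ as the Hamiltonian generator of joint rotations via \eqref{eq:Poisson_J}, and then the frame indifference $E(\m_{R(\alpha)})=E(\m)$ differentiates to $\{E,J_3\}=\delta E(\m)\langle\{\m,J_3\}\rangle=0$, whence $\tfrac{d}{dt}J_3=\{J_3,E\}=0$. The reduction to $J_3$ by frame indifference of $\bs J$ (Lemma~\ref{lemma:frame_indifference}) is exactly what the paper does as well, and your justification that the chain rule and brackets are well defined is adequate since $\bs S$ and $\bs L$ are polynomial in $\m$ and $\St$ has no boundary.

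The paper takes a different, computational route: it evaluates $\tfrac{d}{dt}S_3$ and $\tfrac{d}{dt}L_3$ separately. The first comes directly from \eqref{eq:Landau_Lifshitz}; for the second the paper derives a local conservation law \eqref{eq:omega_conservation} for the topological vorticity $\omega(\m)$ and integrates by parts against the weight $\lambda(x)|x|^2$ in \eqref{eq:stereo_L}. The two time derivatives turn out to be exactly opposite, $\tfrac{d}{dt}S_3=-\tfrac{d}{dt}L_3=\kappa\int_{\St}(\m\cdot\n)(\m\times\n)_3\,\mathrm{d}\sigma$. Your approach is shorter and more conceptual; the paper's explicit computation has the advantage of making visible the dynamical exchange between spin and orbital parts---the ``emergent spin-orbit coupling'' highlighted in the introduction---and produces the vorticity conservation law \eqref{eq:omega_conservation} as a byproduct.
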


\begin{proof} By frame indifference of the Landau-Lifshitz equation, it suffices to prove conservation of $J_3$. 
From \eqref{eq:Landau_Lifshitz} we immediately obtain
\[
\frac{d}{dt} S_3(\m) = \frac{d}{dt}  \int_{\St} m_3  \, d \sigma   
 =    \kappa  \int_{\St} (\m \cdot \n) (\m \times \n)_3 \, \mathrm{d} \sigma.
\]
Moreover in stereographic coordinates \eqref{eq:stereographic}
\[
\nabla \m \cdot (\m \times \del_t \m) = - \nabla \m \cdot \left( \lambda^{-2} \Delta \m + \kappa  (\m \cdot \n) \n \right)  
\]
and taking the curl ($\nabla \times $) we obtain the conservation law \cite{KMMS2011}
\begin{equation} \label{eq:omega_conservation}
\dd{}{t} \omega(\m) =  \curl \left( \kappa   (\m \cdot \n) (\nabla  \n \cdot \m)-  \lambda^{-2} \div (\nabla \m \otimes \nabla \m)  \right).
\end{equation}

Since $Q(\m)$ is conserved 
we obtain from \eqref{eq:omega_conservation} after integration by parts
\begin{eqnarray*}
 \frac{d}{dt} L_3(\m) &=& - \frac{d}{dt} \int_{\R^2}   \lambda(x) |x|^2 \omega(\m) \, \mathrm{d} x \\
&=& \int_{\R^2} \left[  \kappa  (\m \cdot \n) \left(\dd{\n}{\chi} \cdot \m\right) \lambda^2(x)+\mathbb{J}: (\nabla \m \otimes \nabla \m)   \right]  \, \mathrm{d} x \\
&=& \kappa  \int_{\St}  (\m \cdot \n) \left(\dd{\n}{\chi} \cdot \m\right) \mathrm{d} \sigma = - \kappa  \int_{\St}  (\m \cdot \n) (\m \times  \n)_3 \mathrm{d} \sigma. 
\end{eqnarray*}
In fact, $\mathbb{J}=- \nabla_x \times x$ comes about by taking into account \eqref{eq:density}, and the corresponding term is dropping out
by skew-symmetry. In the last step we used $\frac{\del \n}{ \del \chi} = \e_3 \times \n$. It follows that $\frac{d}{dt} J_3(\m)=\frac{d}{dt} (S_3+L_3)(\m)=0$.
\end{proof}

\section{Role of equivariance and elliptic distortion} 

In the following discussion, frame indifference (Lemma \ref{lemma:frame_indifference}) allows us to fix the canonical rotation axis $\e_3$. It is customary to represent the stereographic coordinate as $x = r e^{\mathrm{i} \chi}$ and the magnetization field
as $\m = (\sqrt{1-m_3^2} \, e^{\rm{i} \varphi}, m_3)$ so that
\begin{equation} \label{eq:omega_polar}
\omega(\m) = \frac 1 r  \left(\dd{\varphi}{r}  \dd{m_3}{\chi} -  \dd{m_3}{r} \dd{\varphi}{\chi}  \right).
\end{equation}
Magnetic chiral skyrmions have originally been found in the class of co-rotational (or axisymmetric) fields where  $m_3=m_3(r)$ and $\varphi(\chi)=\chi + \alpha$ for a well-defined phase shift $\alpha$ depending on the form of antisymmetric exchange interaction \cite{Bogdanov_Hubert_1994, Bogdanov_Hubert_1999}.  
In a deep ferromagnetic regime such solutions are indeed locally energy minimizing \cite{Li_Melcher}.
Precessional Landau-Lifshitz dynamics and angular momentum \eqref{eq:angular} are rather linked to the notion of equivariance, which summarizes a wider class of fields $\m:\R^2 \to \St$ satisfying $\m_R=\m$ for $R \in \mathrm{SO}(2)\subset \mathrm{SO}(3)$.
More generally $\m(x) = R^k \m(R^{-1}x)$ for some $k \in \Z$ is referred to as $k$-equivariance. This can also be expressed as
\[
m_3=m_3(r) \quad  \text{and} \quad \varphi(r, \chi)= k \chi+\alpha(r).
\]
From \eqref{eq:omega_polar}
\begin{equation} \label{eq:Q_polar}
Q(\m) = \frac{k}{2}  \left( m_3(\e_3) -  m_3(-\e_3) \right).
\end{equation}
Hence $k$-equivariant fields $\m$ with $Q(\m)=0$ are characterized by non-zero
polarity
\[
p(\m)= \frac{1}{2} \left(m_3(\e_3)+  m_3(-\e_3) \right).
\]

\begin{lemma} \label{lemma:equivariant}
For $k$-equivariant fields $\m$ it holds that $J_1(\m)= J_2(\m)=0$ and
\[
J_3(\m)=   (1-k) S_3(\m) +4 \pi k p(\m).
\]
In particular, $|\bs J|= 4\pi |p|$ for $1$-equivariant fields, the stationary points of $J_3$.
\end{lemma}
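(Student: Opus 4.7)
Working in stereographic coordinates under the $k$-equivariant ansatz $m_3 = m_3(r)$, $\varphi = k\chi + \alpha(r)$ (and tacitly assuming $k \neq 0$, since otherwise $Q(\m) = 0$ is automatic and the polarity is unconstrained), I would first dispose of $J_1 = J_2 = 0$ by a rotational symmetry argument. Rewriting equivariance as $\m(Ry) = R^k\m(y)$ for all rotations $R$ around $\e_3$, the change of variables $x = Ry$ in $\bs S(\m) = \int \lambda^2\m\,\mathrm{d}x$ together with the radial character of $\lambda$ gives $R^k\bs S(\m) = \bs S(\m)$, forcing $\bs S(\m) \parallel \e_3$. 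Moreover, from \eqref{eq:omega_polar} the topological vorticity takes the purely radial form $\omega(\m) = -(k/r)\,m_3'(r)$, so the first two components of $\bs L(\m) = \int \bs\Phi(x)\,\omega(\m)\,\mathrm{d}x$ vanish by the odd symmetry of $x \mapsto x$.

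For the main identity $J_3 = (1-k)S_3 + 4\pi k p$, I would start from \eqref{eq:stereo_L} and exploit the elementary identity $\lambda(x)|x|^2 = 2 - \lambda(x)$ to rewrite
\[
L_3(\m) = -4\pi Q(\m) + \int_{\R^2}\lambda(x)\,\omega(\m)\,\mathrm{d}x.
\]
In polar coordinates the remaining integral reduces to $-2\pi k\int_0^\infty \lambda(r)\,m_3'(r)\,\mathrm{d}r$, and a single integration by parts making use of $\lambda(0) = 2$, $\lambda(\infty) = 0$ and the identity $\lambda'(r) = -r\lambda^2(r)$ splits it cleanly into a boundary contribution $4\pi k\,m_3(\e_3)$ and a bulk contribution $-k S_3(\m)$. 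Combining this with $4\pi Q(\m) = 2\pi k\bigl(m_3(\e_3) - m_3(-\e_3)\bigr)$ from \eqref{eq:Q_polar} yields $L_3(\m) = -k S_3(\m) + 4\pi k p(\m)$, whence $J_3(\m) = (1-k)S_3(\m) + 4\pi k p(\m)$. Specializing to $k = 1$ the $S_3$ contribution cancels, and combined with the first paragraph this gives $|\bs J(\m)| = 4\pi|p(\m)|$.

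Finally, to identify $1$-equivariant fields as the stationary points of $J_3$, I would apply Lemma~\ref{lemma:1st_var} to obtain, for any tangential variation $\bs\phi \perp \m$,
\[
\delta J_3(\m)\langle \bs\phi\rangle = \int_{\St}\left(\e_3 - \m\times\dd{\m}{\chi}\right)\cdot\bs\phi\,\mathrm{d}\sigma.
\]
Stationarity is therefore equivalent to the tangential projection of $\e_3 - \m\times\dd{\m}{\chi}$ vanishing pointwise; since $\m\times\dd{\m}{\chi}$ is automatically tangent to $\m$, this reads $\m\times\dd{\m}{\chi} = \e_3 - m_3\m$. Applying $\m\times$ once more and using $\m\times(\m\times v) = -v$ for $v \perp \m$ reduces the condition to the pointwise ODE $\dd{\m}{\chi} = \e_3\times\m$, which is exactly the infinitesimal form of $1$-equivariance. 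The fussiest step of the whole argument is the integration by parts: one must confirm that the boundary term at $r = \infty$ drops out for regular fields and correctly identify $m_3(r=0)$ with $m_3(\e_3)$ through $\bs\Phi(0) = \e_3$; everything else is pure symmetry and direct algebraic manipulation.
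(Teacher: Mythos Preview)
Your proof is correct and follows essentially the same strategy as the paper: establish $J_1=J_2=0$ by angular symmetry, compute $L_3$ via \eqref{eq:stereo_L} and a single integration by parts in the radial variable, and identify stationary points of $J_3$ via Lemma~\ref{lemma:1st_var} and the characterisation $\partial_\chi\m=\e_3\times\m$. The only cosmetic difference is that the paper integrates $r^2\lambda(r)\,m_3'(r)$ by parts directly (using \eqref{eq:density}, with the boundary term arising at $r=\infty$), whereas you first invoke $\lambda|x|^2=2-\lambda$ and then integrate $\lambda(r)\,m_3'(r)$ by parts (boundary term at $r=0$); the two computations are equivalent.
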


\begin{proof} The claim that $J_1, J_2=0$ follows upon integration in $\chi$.
By virtue of \eqref{eq:omega_polar} and integration by parts
\begin{eqnarray*}
L_3(\m) &=& 4 \pi Q(\m) +  2\pi k \, \int_0^\infty  r^2 \lambda(r) \dd{ m_3 }{r}  dr \\
&=& 4 \pi \left( Q(\m) + k  m_3(\infty) \right) - 2 \pi k  \int_0^\infty   \lambda^2(r)  m_3 \,   r dr
\end{eqnarray*}
which implies the formula for $J_3$ by \eqref{eq:Q_polar}.
Finally
\[
\del_\chi \m = \e_3 \times \m \quad \text{iff} \quad   m_3= m_3(r) \text{  and  } \varphi(r, \chi)= \chi+\alpha(r)
\]
which implies the final claim by virtue of Lemma \ref{lemma:1st_var}.
\end{proof}

Let us now examine the effect of elliptical distortions.
\begin{lemma} \label{lemma:elliptical}
Suppose $\m$ is equivariant and
$\m_s(x)=\m(sx_1,x_2)$ for $s >0$ and $x \in \R^2$. Then 
$
J_1(\m_{s})= J_2(\m_{s})=0
$ 
and
\[
\frac{\mathrm{d}^2}{\mathrm{d}s^2}\Big|_{s=1} \,J_3(\m_{s})  = - \frac{1}{2} \int_{\R^2}   \omega(\m) |x|^2 \lambda^2(x) \, \mathrm{d} x.
\]
\end{lemma}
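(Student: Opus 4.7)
My proof splits into two parts: Part I shows $J_1(\m_s) = J_2(\m_s) = 0$ by a parity argument, and Part II computes $\frac{d^2}{ds^2}\Big|_{s=1} J_3(\m_s)$ by a direct Taylor expansion after pulling all $s$-dependence into scalar weights.

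For Part I, specializing the equivariance $\m(Ry) = R\m(y)$ to the half-turn $R_\pi$ about $\e_3$ yields the parity $\m(-y) = (-m_1, -m_2, m_3)(y)$. Since $T_s(x) := (sx_1, x_2)$ commutes with $x \mapsto -x$, $\m_s$ inherits the same parity. Combined with the scaling $\omega(\m_s)(x) = s\,\omega(\m)(T_s x)$ and the rotational invariance of $\omega(\m)$, the vorticity $\omega(\m_s)$ is even in $x$. Since $\lambda^2(x)$ is even in $x$ while the in-plane components $\bs\Phi_i(x) = \lambda(x) x_i$ of $\bs\Phi$ are odd, the stereographic expressions $S_i(\m_s) = \int \lambda^2 (m_i\circ T_s)\, dx$ and $L_i(\m_s) = \int \bs\Phi_i\, \omega(\m_s)\, dx$ are integrals of odd $\times$ even functions for $i = 1,2$ and hence vanish.

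For Part II, the substitution $y = T_s x$ transforms the stereographic formulas into
\[
S_3(\m_s) = \int_{\R^2} \frac{\lambda^2(y_1/s, y_2)}{s}\, m_3(y)\, dy, \qquad L_3(\m_s) = \int_{\R^2} n_3(y_1/s, y_2)\, \omega(\m)(y)\, dy,
\]
concentrating all $s$-dependence in the scalar weights. A Taylor expansion at $s = 1$ with $y$ held fixed produces the pointwise identities
\[
\frac{d^2}{ds^2}\Big|_{s=1}\! \frac{\lambda^2(y_1/s, y_2)}{s} = 2\lambda^2 - 10 y_1^2\lambda^3 + 6 y_1^4\lambda^4, \quad \frac{d^2}{ds^2}\Big|_{s=1}\! n_3(y_1/s, y_2) = -3 y_1^2\lambda^2 + 2 y_1^4\lambda^3,
\]
with $\lambda = \lambda(y)$ throughout. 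Equivariance then gives $m_3(y) = \tilde m_3(|y|)$ and, by \eqref{eq:omega_polar}, $\omega(\m)(y) = -\tilde m_3'(|y|)/|y|$, so polar coordinates reduce the angular integrals to the elementary $\int_0^{2\pi}\cos^{2n}\chi\, d\chi$ for $n = 0,1,2$.

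The remaining radial integrals simplify through two elementary tools: the stereographic identity $r^2\lambda = 2 - \lambda$ (which yields $r^2\lambda^3 = 2\lambda^2 - \lambda^3$ and $r^4\lambda^4 = 4\lambda^2 - 4\lambda^3 + \lambda^4$) together with integration by parts via $\lambda'(r) = -r\lambda^2(r)$, applied to the $L_3$ contribution in order to cancel the $\lambda^4$-terms produced by $S_3$. After consolidation the sum collapses to $2\pi\int_0^\infty(\lambda^2(r) - \lambda^3(r))\, \tilde m_3(r)\, r\, dr$. A parallel polar reduction of the claimed right-hand side, $-\tfrac{1}{2}\int_{\R^2}\omega(\m)|x|^2\lambda^2\, dx = \pi\int_0^\infty \tilde m_3'(r)\, r^2\lambda^2(r)\, dr$, followed by a single integration by parts, yields the identical expression, proving the identity. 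The main obstacle is purely arithmetic bookkeeping — several $\lambda^k$-terms must be consolidated correctly and cross-cancellations identified, and factors of $\pi$ and signs tracked with care; a sanity check on the Belavin–Polyakov profile $\tilde m_3 = (1-r^2)/(1+r^2)$, where both sides evaluate to $-4\pi/3$, is prudent before committing to the final calculation.
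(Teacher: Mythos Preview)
Your proof is correct. For Part~I your half-turn parity argument is essentially the reflection argument the paper invokes as well. For Part~II, however, you take a genuinely different route. The paper works variationally, writing $\frac{\mathrm{d}^2}{\mathrm{d}s^2}\big|_{s=1}J_3(\m_s) = {\rm Hess}\,J(\m)\langle\dot\m,\dot\m\rangle = \delta^2 J(\m)\langle\dot\m,\dot\m\rangle - \delta J(\m)\langle|\dot\m|^2\m\rangle$ with $\dot\m = x_1\frac{\del\m}{\del x_1}$, computes these variations from Lemma~\ref{lemma:1st_var} and \eqref{eq:delta_L}, and then exploits the equivariance identity $\frac{\del\m}{\del\chi}=\e_3\times\m$ to reduce the surviving integrand to $x_1^2\,\omega(\m)\lambda^2$; radial symmetry replaces $x_1^2$ by $|x|^2/2$. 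You instead push all $s$-dependence into the scalar weights $\lambda^2(y_1/s,y_2)/s$ and $\Phi_3(y_1/s,y_2)$ after the substitution $y=T_sx$, Taylor-expand these pointwise, and then use the explicit radial structure $m_3=\tilde m_3(r)$, $\omega(\m)=-\tilde m_3'/r$ together with the stereographic identity $r^2\lambda=2-\lambda$ and $\lambda'=-r\lambda^2$ to collapse the radial integrals. The paper's route is more structural --- it exhibits the cancellation via the pointwise identity $(\m\times\frac{\del}{\del\chi}\frac{\del\m}{\del x_1})\cdot\frac{\del\m}{\del x_1} = \omega(\m)-m_3\big|\frac{\del\m}{\del x_1}\big|^2$ and uses equivariance only at the very end --- while yours is more elementary, requiring no variational calculus but more arithmetic bookkeeping; both arrive at the same clean formula, and your Belavin--Polyakov sanity check is a sensible guard against the numerical hazards you rightly flag.
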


Since $\omega(\n)=\lambda^2(x)>0$ for the map $\n=\n(\bs{\Phi}(x))$, Lemma \ref{lemma:equivariant} and \ref{lemma:elliptical} imply:
\begin{corollary}\label{cor:maximum}
In the class of almost minimizing co-rotational skyrmions such that $Q(\m)=0$, $E(\m)< 8 \pi$ and $\m=\n$ away from a small spherical cap centered at the north pole $x=0$,
$J_3(\m_s)$ has a strict local maximum $J_3(\m)=-4\pi$ at $s=1$. In other words,
the size of $\bs{J}(\m_s)$ has a strict local minimum at $s=1$.
\end{corollary}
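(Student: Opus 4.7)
\emph{Plan.} The argument reduces the claim to verifying that the second-moment integral in Lemma \ref{lemma:elliptical} is strictly positive.

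First, I evaluate $\bs J(\m)$ at the baseline. ``Co-rotational'' means $1$-equivariant in the sense of the preceding discussion ($k = 1$, $\varphi = \chi + \alpha$). Since $\m = \n$ outside a small neighborhood of $x = 0$ (the stereographic image of $\e_3$), we have $\m(-\e_3) = -\e_3$, so $m_3(-\e_3) = -1$; combined with $Q(\m) = 0$ this forces $m_3(\e_3) = -1$ via \eqref{eq:Q_polar}, hence $p(\m) = -1$. Lemma \ref{lemma:equivariant} then gives $J_1(\m) = J_2(\m) = 0$ and $J_3(\m) = 4\pi p(\m) = -4\pi$, so $|\bs J(\m)| = 4\pi$.

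Next, I analyze $J_3(\m_s)$ near $s = 1$. From Lemma \ref{lemma:elliptical}, $J_1(\m_s) = J_2(\m_s) \equiv 0$, whence $|\bs J(\m_s)| = |J_3(\m_s)|$, and the second derivative at $s = 1$ equals $-\tfrac12\int_{\R^2}\omega(\m)|x|^2\lambda^2(x)\,dx$. The first derivative vanishes at $s = 1$: the variation $\partial_s\m_s|_{s=1} = x_1\partial_1\m$ is tangent to $\m$, and using $\partial_\chi\m = \e_3\times\m$ for $1$-equivariant fields in Lemma \ref{lemma:1st_var}, a short calculation gives $\delta J_3(\m)\langle\bs\phi\rangle = \int m_3(\m\cdot\bs\phi)\lambda^2\,dx$, which vanishes on tangent $\bs\phi$. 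Consequently, if the integral in the second derivative is strictly positive, $s = 1$ is a strict local maximum of $J_3(\m_s)$ at value $-4\pi$, yielding a strict local minimum of $|\bs J(\m_s)|$.

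For the strict positivity $\int_{\R^2}\omega(\m)|x|^2\lambda^2\,dx > 0$, I split $\R^2 = D_r \cup (\R^2\setminus D_r)$ where $D_r$ contains the stereographic image of the spherical cap. On the complement $\m = \n$, hence $\omega(\m) = \lambda^2$, and $\int_{\R^2\setminus D_r}\lambda^4|x|^2\,dx \to \int_{\R^2}\lambda^4|x|^2\,dx = 8\pi/3$ as $r \to 0$. On $D_r$ the BPS-type bound $|\omega(\m)| \le \tfrac12|\nabla\m|^2$ together with $\tfrac12\int|\nabla\m|^2 \le E(\m) < 8\pi$ and $|x|^2\lambda^2 \le 4r^2$ give $\bigl|\int_{D_r}\omega(\m)|x|^2\lambda^2\,dx\bigr| \le 32\pi r^2$, which is dominated by the exterior contribution for $r$ sufficiently small. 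The main obstacle is precisely this last estimate: $\omega(\m)$ on $D_r$ is of indefinite sign (a zero-skyrmion must balance positive and negative vorticity), so one must jointly invoke the smallness of the cap and the energy bound $E(\m) < 8\pi$ to ensure that the unsigned interior moment cannot overwhelm the explicit positive exterior contribution.
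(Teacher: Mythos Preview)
Your argument is correct and follows the same route as the paper: use Lemma~\ref{lemma:equivariant} to get $J_3(\m)=-4\pi$ and the vanishing first variation, then Lemma~\ref{lemma:elliptical} for the second derivative, and finally the fact that $\omega(\n)=\lambda^2>0$ to conclude positivity of the second-moment integral. The paper leaves this last step as a one-line remark; your explicit splitting into cap and complement, together with the pointwise bound $|\omega(\m)|\le\tfrac12|\nabla\m|^2$ and the energy bound $E(\m)<8\pi$, makes rigorous and quantitative what the paper only indicates.
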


\begin{proof}[Proof of Lemma \ref{lemma:elliptical}] The claim that $J_1, J_2=0$ follows easily from reflection arguments, taking into account Lemma \ref{lemma:frame_indifference}. 
For the claim about $J_3$ we observe that (writing $J=J_3$)
\[
\frac{\mathrm{d}^2}{\mathrm{d}s^2}\Big|_{s=1} \,J(\m_{s})  
= {\rm Hess} \, J(\m) \left\langle  \dot{\m} ,\dot{\m}  \right\rangle
\quad \text{with} \quad \dot{\m} = \frac{\mathrm{d}}{\mathrm{d}s}\Big|_{s=1} \m_s 
\]
where for admissible tangent fields $\bs \phi \perp \m$, the Hessian is given by
\[
{\rm Hess} \, J(\m) \langle \bs \phi, \bs \phi \rangle = \delta^2J(\m) \langle \bs \phi, \bs \phi \rangle - \delta J(\m)\langle |\bs \phi|^2 \m \rangle. 
\]
From Lemma \ref{lemma:1st_var} and taking into account \eqref{eq:delta_L} we obtain
\[
 \delta J(\m)\langle |\bs \phi|^2 \m \rangle   =    \int_{\St } m_3   |\bs \phi|^2 \mathrm{d} \sigma
 - 3 \int_{\R^2}   \lambda(x) |x|^2 \omega(\m) \, |\bs \phi|^2  \, \mathrm{d} x,
\]
and in turn
\begin{eqnarray*}
 \delta^2 J(\m)\langle\bs \phi,  \bs\phi \rangle = 
 &-&  \int_{\St}   \left( \m \times \dd{\bs \phi }{\chi} \right) \cdot \bs \phi \, \mathrm{d} \sigma \\
 &+& 3 \sum_{\alpha, \beta =1,2} \epsilon_{\alpha \beta}   \int_{\R^2}   \lambda(x) |x|^2 \left(  \bs \phi  \times \frac{\del \m}{\del x_\beta} \right) \cdot  \frac{\del \bs \phi}{\del x_\alpha} \, \mathrm{d} x.
\end{eqnarray*}
Letting $\bs \phi = \dot{\m}= x_1 \dfrac{\del \m}{\del x_1}$ it follows from 
\[
 \left( \frac{\del \m}{\del x_1}  \times \frac{\del \m}{\del x_2} \right) \cdot  \frac{\del^2 \m}{\del x_1^2} 
 = \omega(\m) \; \m \cdot  \frac{\del^2 \m}{\del x_1^2} = - \omega(\m)  \left|\frac{\del \m}{\del x_1}\right|^2
\]
that the second integrals in $\delta J$ and $\delta^2 J$ cancel and
\[
\frac{\mathrm{d}^2}{\mathrm{d}s^2}\Big|_{ s=1} \,J(\m_{s}) = - \int_{\R^2} x_1^2  \left( \left( \m \times \frac{\del}{\del \chi} \left( \frac{\del \m}{\del x_1} \right)\right) \cdot  \frac{\del \m}{\del x_1} + m_3   \left|\frac{\del \m}{\del x_1}\right|^2 \right) \lambda^2(x) \mathrm{d} x.
\]
Since
$
\frac{\del}{\del \chi} \left(  \frac{\del \m}{\del x_1} \right) =  \frac{\del}{\del x_1} \left( \frac{\del \m}{\del \chi} \right)  - \frac{\del \m}{\del x_2}$
and by equivariance $\dd{\m}{\chi}= \e_3 \times \m$
\[
  \left( \m \times \frac{\del}{\del \chi} \left( \frac{\del \m}{\del x_1} \right)\right) \cdot  \frac{\del \m}{\del x_1} =
  \omega(\m) - m_3   \left|\frac{\del \m}{\del x_1}\right|^2,\]
the asserted formula for $\frac{\mathrm{d}^2}{\mathrm{d}s^2}\big|_{s=1} \,J(\m_{s})$ immediately follows taking into account the radial symmetry of $\lambda$ and $\omega(\m)$.
\end{proof}

\section{Energy bounds and attainment}

In this section we give a proof of the main Theorem stated in the introduction, using methods from the calculus of variations.
To this end, it is routine to extend the functionals $E$, $Q$ and $\bs J$ continuously to the energy space $H^1(\St;\St)$, see
e.g. \cite{Mel1}. By Lemma \ref{lemma:frame_indifference} we may assume $\bs J_0=J_0 \e_3$.
Once attainment of \eqref{eq:J_inf} by a smooth field is established, the remaining assertions follow from Lemma \ref{lemma:equivariant},
and from the Langrange multiplier theorem and Lemma \ref{lemma:rotation}, respectively.

\begin{lemma}\label{lemma:energy}
For every $ \kappa>0$ there exists a co-rotational field $\m \in H^1(\St;\St)$ with $Q(\m)=0$ and $E(\m)<8\pi.$
\end{lemma}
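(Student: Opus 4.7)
My plan is to construct an explicit $1$-equivariant ansatz $\m_\delta$ that coincides with the hedgehog $\n$ outside a small polar cap and houses a tiny antiskyrmion profile inside, tuned so that $Q(\m_\delta)=0$ and the energy is strictly below $8\pi$. Concretely, working in stereographic polar coordinates $x=re^{\mathrm{i}\chi}$, I would set $\m_\delta=(\sin\theta_\delta(r)\,e^{\mathrm{i}\chi},\cos\theta_\delta(r))$ with the piecewise profile
\begin{equation*}
\theta_\delta(r)=\begin{cases} 2\arctan(\delta^2/r) & \text{for } 0<r\leq\delta,\\ 2\arctan r & \text{for } r>\delta,\end{cases}
\end{equation*}
where $\delta\in(0,1)$ is a small parameter. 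The two branches match at $r=\delta$ (both equal $2\arctan\delta$), so $\m_\delta$ is continuous, piecewise smooth and therefore $H^1$, and agrees with $\n$ outside the cap $\{r\leq\delta\}$. Because $\theta_\delta(0)=\theta_\delta(\infty)=\pi$, the equivariant charge formula \eqref{eq:Q_polar} in Lemma \ref{lemma:equivariant} yields $Q(\m_\delta)=0$.

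The exchange energy I compute exactly using the Bogomolny structure: both branches satisfy $r|\theta_\delta'|=\sin\theta_\delta$, saturating the pointwise inequality $r(\theta')^2+\sin^2\theta/r\geq 2|\theta'\sin\theta|$ that underlies the $4\pi|Q|$ bound. Summing the resulting boundary terms on the two intervals gives
\begin{equation*}
\frac{1}{2}\int_{\R^2}|\nabla\m_\delta|^2\,\mathrm{d}x=2\pi\bigl(\cos\theta_\delta(\delta)-\cos\theta_\delta(0)\bigr)+2\pi\bigl(\cos\theta_\delta(\delta)-\cos\theta_\delta(\infty)\bigr)=\frac{8\pi}{1+\delta^2},
\end{equation*}
which is strictly below $8\pi$ with a gap of order $\delta^2$.

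For the anisotropy, the integrand equals $\tfrac{\kappa}{2}\sin^2(\theta_\delta-\theta_\n)\lambda^2$ and is supported in the cap $\{r\leq\delta\}$. The key geometric observation is that both profiles stay close to their respective asymptotes ($\pi$ for $\theta_\delta$, $0$ for $\theta_\n$) throughout the cap except on a thin transition annulus, since $\theta_\delta$ shifts from near $\pi$ to near $0$ on the scale $r\sim\delta^2$ while $\theta_\n$ remains of size $r$. Splitting the cap into the subregions $r\ll\delta^2$, $r\sim\delta^2$ and $\delta^2\ll r\leq\delta$ and Taylor expanding in each, I get $\sin^2(\theta_\delta-\theta_\n)=O(r^2/\delta^4)$ in the first, $O(1)$ on the transition annulus of area $O(\delta^4)$, and $O(r^2)$ in the third; combined with $\lambda^2\leq 4$, this yields a total anisotropy contribution of order $\kappa\delta^4$.

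Putting these estimates together, $E(\m_\delta)\leq 8\pi/(1+\delta^2)+C\kappa\delta^4=8\pi-8\pi\delta^2/(1+\delta^2)+O(\kappa\delta^4)$, and choosing $\delta$ small enough in terms of $\kappa$ renders this strictly less than $8\pi$. The main obstacle is securing the quartic $O(\kappa\delta^4)$ bound on the anisotropy rather than the naive $O(\kappa\delta^2)$ coming from a crude area estimate; only the former beats the $-8\pi\delta^2$ gain of the Dirichlet term uniformly in $\kappa$, so the scale separation $\delta^2\ll\delta$ hard-wired into the profile is essential to the construction.
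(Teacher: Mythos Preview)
Your construction is correct and yields the lemma, but it is genuinely different from the paper's argument. The paper works in the moving frame $\{\taf_1,\taf_2,\n\}$, writing $\m$ via a coordinate field $\bs u$ so that $E(\m)=\mathcal E_0(\bs u)+\mathcal E_1(\bs u)$, where $\mathcal E_0$ looks like a planar chiral-magnet energy with a Dzyaloshinskii--Moriya type term; the $-\mathrm{const}\cdot\eps$ gain then comes from this DMI-like contribution together with a sign analysis of $\mathcal E_2$, following the scheme of \cite{Mel1}. By contrast you stay with $\m$ itself and glue two conformal pieces (an inner inverted stereographic profile at scale $\delta^2$ and the hedgehog $\n$ outside $r=\delta$), so that the Bogomolny identity is saturated on each branch and the entire gain $8\pi-8\pi/(1+\delta^2)\sim 8\pi\delta^2$ comes from the \emph{exchange} energy alone. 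Your route is more elementary---no frame decomposition, no curvature-induced DMI---at the price of the two-scale ansatz, while the paper's route makes the connection to chiral skyrmions on $\R^2$ explicit and reuses that machinery.

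One small correction: your pointwise bound in the third region is not quite right. For $\delta^2\ll r\le\delta$ you have $\theta_\delta\approx 2\delta^2/r$, which dominates $\theta_\n\approx 2r$ except near $r=\delta$, so $\sin^2(\theta_\delta-\theta_\n)=O(\delta^4/r^2)$ rather than $O(r^2)$. Integrating $\delta^4/r^2$ against $r\,\mathrm{d}r$ over $[\delta^2,\delta]$ produces a logarithm, and the anisotropy is in fact $O(\kappa\,\delta^4\lvert\log\delta\rvert)$, not $O(\kappa\,\delta^4)$. This is harmless for the conclusion, since $\delta^4\lvert\log\delta\rvert=o(\delta^2)$ still beats the exchange deficit $8\pi\delta^2/(1+\delta^2)$ for $\delta$ small depending on $\kappa$; but the stated $O(r^2)$ bound is incorrect as written.
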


\subsection*{Moving frame formulation} We shall construct a suitable trial field $\m:\R^2\to \St$ in stereographic coordinates 
such that $\m(x)= \n(\bs \Phi(x))$ for large $x$. 
The construction will be based on a coordinate field $\bs u: \R^2 \to \St$ that represents $\m$ in the orthogonal frame  $\{\taf_1, \taf_2, \n \}$
where the $\taf_{\alpha}$ are the unit coordinate vector fields for $\bs \Phi$. In this framework the normal anisotropy simplifies
to the conventional uniaxial anisotropy.  The exchange energy expands into several parts, which can be conveniently determined by virtue of  Cartan's calculus as in \cite{Krav}. One obtains that $E(\m)=\mathcal{E}_0(\bs u) +\mathcal{E}_1(\bs u)$
where
\[
\mathcal{E}_0(\bs u)   =  \int_{\R^2} \left[ \dfrac{1}{2} |\nabla \bs u|^2+ 
 \big[   u_3 (\nabla \cdot  u) - ( u \cdot \nabla) u_3   \big] \lambda(x) +  \dfrac{\kappa}{2} (1- u_3^2) \lambda^2(x) \right] \mathrm{d}x 
\]
and
\[
\mathcal{E}_1(\bs u) = \int_{\R^2} \left[  \left(  (1- u_3^2) - \left(u \times \frac{\del u}{ \del \chi}\right)  \right)  \lambda(x) + \left(u_3^2 -  (x \cdot u)u_3 \right)\lambda^2(x)  \right] \mathrm{d}x.
\]
As already observed in \cite{Krav}, $\mathcal{E}_0$ is a slight modification of the energy of a chiral magnet which interfacial Dzyaloshinskii-Moriya interaction, revealing the stabilizing effect of curvature of the sphere. Our construction will be based on a co-rotational Lipschitz map $\bs u$ with anti-conformal core and finite tail, following the argument in \cite{Mel1}.  

\begin{proof}[Sketch of proof of Lemma \ref{lemma:energy}] Taking into account the identity \cite{Hoffmann, Mel1} 
\[
 \dfrac{1}{2} |\nabla \bs \Phi|^2 =
 \big[   \bs \Phi (\nabla \cdot  \bs \Phi) - ( \bs \Phi \cdot \nabla) \bs \Phi   \big]_3 = \lambda^2
\]
for the stereographic map \eqref{eq:stereographic}
we set $\bs u(x) = \bs \Phi_{-}(x/\eps)$ for $0\le |x| \le 1$ and some scale $\eps>0$ depending on $\kappa$.
Here $\boldsymbol{\Phi}_{-}$ is the orientation reversing stereographic map obtained by reversing the sign of $\Phi_3$
with polar profile $\theta_0(r)= \pi -2\arctan (r)$. We extend $\bs u$ in a co-rotational manner by setting $u_3= \cos \theta$  where $\theta(r)=\theta_0 (1/\eps)(2-r)$ for $1<r<2$ and zero else. We obtain as in \cite{Mel1} 
\[
\mathcal{E}_0(\bs u) \le 4 \pi \left(1-\eps \right) + o_{\kappa}(\eps)
\]
as $\eps \to 0$. Since $1- u_3^2 = u \times \dd{u}{\chi}$, it follows that
\[
\mathcal{E}_1(\bs u)  \le 4 \pi +\mathcal{E}_2(\bs u) 
\quad \text{where} \quad
\mathcal{E}_2(\bs u) = -\intt  \left(  |x| \left(\sqrt {1 -u_3^2}\right)u_3  \right)\lambda^2(x) \mathrm{d}x.
\]
Using that $1 \le \lambda^2(x) \le 4$ for all $|x| \le 1$ we obtain
\begin{align*}
 & \mathcal{E}_2^{\rm{core}}(\bs u)= 2 \eps^3 \int_{|x| \le 1/\eps} \dfrac{ |x|^2(1 - |x|^2 )}{(1+|x|^2)^2} \lambda^2(\eps x) \mathrm{d}x  \\
& \leq 4\pi \eps^{3}  \left(  4 \int_0^1 \dfrac{ r^3(1- r^2 )}{(1+r^2)^2} \mathrm{d}r +   \int_1^{1/\eps}\dfrac{r^3(  1- r^2  )}{(1+r^2)^2}  \mathrm{d}r   \right) = - 2\pi \eps + o(\eps) 
\end{align*}
as $\eps \to 0$. Since $u_3(r) \ge 0$ for $1 \le r \le 2$ and $\eps$ sufficiently small we have $\mathcal{E}_2^{\rm{tail}}(\bs u)\le 0$.
Hence $\mathcal{E}_0(\bs u)+\mathcal{E}_1(\bs u) \le  8 \pi - 6 \pi \eps   + o_{\kappa}(\eps)$ as $\eps \to 0$, and the estimate follows. By construction, the map $\m=u_1 \taf_1 + u_2 \taf_2 + u_3 \n$ has the same energy. Finally, $-\e_3$ is a regular value
of $\m$, which is attained precisely at the poles near which the map is smooth and orientation reversing and preserving, respectively. Using a localization argument as in \cite{KMMS2011} Proposition 2, Brouwer's degree formula yields $Q(\m)=0$.
\end{proof}

The construction interpolates between the hedgehog map $\m = \n$ near the south pole and and its strongly localized  inversion near the north pole. Rescaling $\bs u$ or following the strategy in \cite{Doering_Melcher}, the construction may be modified in such a way that $\m=\n$ (i.e. $\bs u =\e_3$) away from an arbitrarily small spherical cap.
Since the energy is continuous in $H^1(\St; \St)$, an elliptical distortion in the spirit of  Corollary \ref{cor:maximum}
yields the following extension. 

\begin{corollary} \label{cor:energy}
For every $\kappa>0$ there exists $\eps>0$ such that if 
$4\pi < |J_0| < 4\pi+\eps$, then there exists $\m \in H^1(\St;\St)$ with $Q(\m)=0$, $\bs J(\m)= J_0 \e_3$,
and $E(\m)<8\pi$.
\end{corollary}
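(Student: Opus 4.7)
The plan is to combine Lemma \ref{lemma:energy} (in its strengthened form from the remark following its proof) with the second-variation analysis of Corollary \ref{cor:maximum} via a one-parameter continuation. First I would use the rescaling mentioned after Lemma \ref{lemma:energy} to fix a smooth co-rotational trial field $\m_0 \in H^1(\St;\St)$ with $Q(\m_0) = 0$, $E(\m_0) < 8\pi$, and $\m_0 = \n$ outside an arbitrarily small spherical cap around the north pole. A direct inspection of the construction (evaluating $m_3$ at the two poles using the frame $\taf_1, \taf_2, \n$) shows that $\m_0$ is $1$-equivariant with polarity $p(\m_0) = -1$, so Lemma \ref{lemma:equivariant} gives $\bs J(\m_0) = -4\pi\,\e_3$.

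Next, consider the family of elliptical distortions $\m_s(x) = \m_0(sx_1, x_2)$ for $s > 0$. Elementary substitution in stereographic coordinates shows that $s \mapsto E(\m_s)$ and $s \mapsto \bs J(\m_s)$ are continuous, and that $Q(\m_s) = Q(\m_0) = 0$ (the topological vorticity transforms as a two-form under the affine change of variables). By Lemma \ref{lemma:elliptical} the distortion preserves the axial symmetry of angular momentum, so $\bs J(\m_s) = J_3(\m_s)\,\e_3$; by Corollary \ref{cor:maximum} the real-valued function $s \mapsto J_3(\m_s)$ attains a strict local maximum $-4\pi$ at $s = 1$. Consequently $J_3(\m_s) < -4\pi$ for $s \neq 1$ close to $1$, while the energy bound $E(\m_s) < 8\pi$ persists by continuity of $E$.

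I would then apply the intermediate value theorem on each side of $s = 1$: the image of a small interval under $s \mapsto J_3(\m_s)$ covers some $(-4\pi - \eta, -4\pi)$, with $\eta > 0$ supplied by the strict negativity of the second derivative computed in Lemma \ref{lemma:elliptical}. Thus for any scalar $J_0$ with $-4\pi - \eta < J_0 < -4\pi$ one finds $s$ realising $\bs J(\m_s) = J_0\,\e_3$, $Q(\m_s) = 0$, and $E(\m_s) < 8\pi$. For the complementary range $4\pi < J_0 < 4\pi + \eta$, the same $\m_s$ gives $\bs J(\m_s) = -J_0\,\e_3$; applying a joint rotation by $R \in \SO(3)$ with $R\e_3 = -\e_3$ (for instance a $\pi$-rotation about $\e_1$), frame indifference (Lemma \ref{lemma:frame_indifference}) delivers $(\m_s)_R$ with $\bs J((\m_s)_R) = R\bs J(\m_s) = J_0\,\e_3$, the same charge, and the same energy. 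Setting $\eps = \eta$ completes the argument.

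The proof is essentially routine given the prior lemmas; the only delicate point is to verify that the strict local maximum furnished by Corollary \ref{cor:maximum} actually yields a full open interval of attainable values of $|\bs J|$ via the IVT, which is ensured by continuity of all functionals under the affine rescaling together with the strictly negative second variation in Lemma \ref{lemma:elliptical}.
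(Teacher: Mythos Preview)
Your proposal is correct and follows exactly the approach sketched in the paper: start from the co-rotational trial field of Lemma~\ref{lemma:energy} (localized near the north pole so that Corollary~\ref{cor:maximum} applies), perform an elliptical distortion to push $|J_3|$ strictly above $4\pi$ while keeping $E<8\pi$ by continuity, and use frame indifference to reach either sign of $J_0$. The paper states this in one sentence; you have supplied the routine details (polarity computation, IVT, the $\pi$-rotation about $\e_1$) that the paper leaves implicit.
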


\subsection*{Proof of attainment} 
Let for an admissible set of parameters
\[
E_\ast= \inf \{ E(\m):\m \in H^1(\St;\St) \text{  with  } Q(\m)=0 \text{  and  } \bs J(\m)= J_0 \e_3\}
\]
and $\{\m_{n}\}_{n\in \N}\subset H^{1}(\St, \St)$  be a minimizing sequence with $\lim_{n \to \infty} E(\m_n) = E_\ast$ such that $Q(\m_{n})=0$
and $\bs J(\m_n)=J_0 \e_3$ for all $n\in\N$. Then Corollary \ref{cor:energy} implies that $E(\m_{n})<8\pi$. 
Passing to a subsequence we may assume $\m_n \to \m$ in $L^2$ and weakly in $H^1$ as $n \to \infty$ for some $\m\in H^1(\St;\St)$
such that $E(\m)\leq\liminf_{n\to\infty}E(\m_n)=E_\ast$ and $\bs S(\m) = \lim_{n\to\infty}\bs S(\m_n)$. 
It remains to verify that $Q$ and $\bs J$ behave continuously, which is true provided $\omega(\m_{n}) \weak \omega( \m)$ 
appropriately as $n \to \infty$. If not, then, by virtue of \cite{BCL} Theorem E.1 and \cite{Lions} Lemma 4.3,  there exist 
finitely many points $x_1\ldots x_N\in \R^2 \cup \{\infty\}$ and $q_1,\ldots, q_N\in  \Z \setminus \{0\}$ satisfying
\begin{equation} \label{eq:cc}
E(\m) + 4 \pi \sum_{i=1}^N |q_i| \le E_\ast \quad \text{and} \quad Q(\m) +  \sum_{i=1}^N q_i =0 
\end{equation}
such that for a subsequence $n \to \infty$
\[
\omega(\m_{n}) \weak \omega( \m) + 4\pi  \sum_{i=1}^N q_i \delta_{x_i}
\]
weakly in the sense of measures on the compactified space $\R^2 \cup \{\infty\}$. Alternatively one may use a second orientation preserving stereographic chart. 
But as $E_\ast<8 \pi$
we  must have $N=1$ and $q_1 = \pm 1$. Hence $Q(\m)= \pm 1$ and $E(\m) \ge 4\pi$ by the classical topological lower bound \cite{Belavin_Polyakov}, contradicting \eqref{eq:cc}. Smoothness follows from well-established methods from the regularity theory 
of harmonic maps from surfaces, see e.g. \cite{Moser}.\\

The same line of arguments also displays the skyrmionic character of such field configurations in the regime of large $\kappa$
in terms of a compactness result: for $\kappa \to \infty$ and $\m_{\kappa} \in H^1(\St; \St)$ with $Q(\m_{\kappa})=0$ and 
$E_{\kappa}(\m_{\kappa})< 8 \pi$ we have subconvergence
\[
\m_{\kappa} \to \pm \n \quad \text{in measure and} \quad \omega(\m_{\kappa}) \weak \omega( \m)  \mp 4\pi  \delta_{x_0}.
\]
weakly in the sense of measures for some $\y_0=\bs \Phi(x_0) \in \St$.\\

\subsection*{Acknowledgements} 
We are indebted to Stavros Komineas for pointing out the relevance of angular momenta in the context of chiral magnetism and for valuable discussions on the subject matter. 

\bigskip
 
On behalf of all authors, the corresponding author states that there is no conflict of interest.  
  
\bibliography{Melcher_Sakellaris_LMP.bib}
\bibliographystyle{acm}

\end{document}